\documentclass[letterpaper]{article} 
\usepackage{aaai2027}                
\usepackage[hyphens]{url}            
\usepackage{graphicx}                
\usepackage{natbib}                  
\usepackage{caption}                 
\usepackage{booktabs}
\usepackage{multirow}
\usepackage{amsmath}
\usepackage{amssymb}
\usepackage{mathtools}
\usepackage{amsthm}

\theoremstyle{plain}
\newtheorem{theorem}{Theorem}
\newtheorem{proposition}[theorem]{Proposition}

\newtheorem{corollary}[theorem]{Corollary}
\theoremstyle{definition}

\theoremstyle{remark}

\title{Significance-First Splitting: Aligning Treatment Heterogeneity Detection with Honest Estimation}
\author{
    Pantelis Z. Hadjipantelis,
    Weng Man Chiang,
    Karthik Nagesh
}
\affiliations{TripAdvisor, UK}
\copyrighttext{Preprint. Under review.}

\begin{document}

\maketitle

\begin{abstract}
Estimating heterogeneous treatment effects (CATE) requires simultaneously
detecting effect modification and quantifying estimation uncertainty.
Existing tree-based methods make an uneasy trade-off: significance-based
approaches \cite{radcliffe2011real} identify subgroup interactions directly
but lack valid inference; honest causal trees \cite{athey2016recursive}
deliver nominal confidence interval coverage but use outcome-agnostic
splitting criteria that sacrifice interaction sensitivity. We introduce a
hybrid algorithm that fuses significance-based splitting with honest
sample-splitting and cross-validation.  Our splitting criterion uses the
squared $t$-statistic for the treatment $\times$ side interaction ($t^2$),
which is shown to be directly aligned with the honest $\text{EMSE}_\tau$
criterion when the interaction is strong. Post-hoc honest cross-validation
selects the cost-complexity penalty, giving a single principled estimator
with nominal CI coverage at the leaf level. For forests, we retain
bootstrap count vectors to enable an infinitesimal jackknife (IJ) variance
estimate of Monte-Carlo convergence rather than formal pointwise inference.
On the three synthetic designs from \cite{athey2016recursive} the single tree
achieves approximately 90\% leaf-average CI coverage at the 90\% nominal level across all three designs
(200 replications each); on the Criteo, Hillstrom and
Starbucks uplift datasets we match Qini coefficient performance of
S-, T-learner and GRF baselines. An open-source Python package with
reproducible seeds, sklearn-compatible API, and full test coverage
accompanies this work (\url{https://codeberg.org/hadjipantelis/rattus}).
\end{abstract}

\section{Introduction}

Uplift modelling is central to decision-making in clinical trials,
marketing, and public policy; it allows one to estimate the
individual-level causal effect of a treatment. The fundamental object
of interest is the Conditional Average Treatment Effect (CATE):

$$\tau(x) = \mathbb{E}[Y(1) - Y(0) \mid X = x]$$

where $Y(d)$ denotes the potential outcome under treatment arm
$d \in \{0, 1\}$. Accurate estimation of $\tau(x)$ enables targeting of
interventions toward those most likely to benefit, and exclusion of
subgroups for whom the treatment is non-beneficial. This is therefore a
problem with direct financial as well as ethical implications.

Tree-based methods are the dominant practical approach because they
produce interpretable partitions of the covariate space, each with an
associated treatment effect estimate. Two influential lines of work represent
the current state of the art, but neither fully addresses the joint problem
of interaction detection and valid inference.

\textbf{Significance-based uplift trees} \cite{radcliffe2011real} score
each candidate split by the $t^2$-statistic for the null hypothesis that
the treatment effect is equal in both child nodes. This directly tests for
effect modification and is robust to imbalanced treatment assignment
\cite{radclifte2008identifying,DEVRIENDT2021497}. However, these trees use
the same data for tree construction and leaf estimation, which inflates
in-sample CATE estimates and renders confidence intervals invalid.

\textbf{Honest causal trees} \cite{athey2016recursive} address this by
splitting the data so that tree structure and leaf estimates are computed on
independent samples. This eliminates adaptive bias and delivers nominal CI
coverage without sparsity assumptions. However, the splitting criterion
(goodness-of-fit in the $\text{EMSE}_\tau$ sense) does not directly test
for effect modification and can split on covariates that predict the outcome
level rather than treatment heterogeneity.

Currently, no existing method simultaneously: (i) uses a statistically
principled test for effect modification as the split criterion, (ii)
delivers honest, unbiased leaf estimates with nominal CI coverage, and
(iii) scales to forests with variance estimation via the infinitesimal
jackknife. We introduce a hybrid algorithm that closes this gap. Our key
contributions are:

\begin{enumerate}

\item \textbf{Algorithmic fusion.} We show that the $t^2$ interaction
statistic and the honest $\text{EMSE}_\tau$ criterion are aligned when
treatment heterogeneity drives the split, and incompatible only when a
covariate affects mean outcomes but not treatment effects. We use $t^2$ for
splitting and honest CV for pruning, exploiting each criterion where it has
comparative advantage.

\item \textbf{Honest significance forests.} We extend the single-tree
algorithm to an ensemble with feature subsampling, where each tree undergoes
an internal honest sample split. Bootstrap count vectors are retained per
tree, enabling efficient IJ variance estimation via a single matrix multiply.

\item \textbf{IJ Monte-Carlo diagnostic.}  We implement the
Wager, Hastie \& Efron (2014) IJ estimator adapted to the honest bootstrap
structure, computed in $O(B^2(n+n_{\text{test}}))$ via the Gram matrix
$G = CC^\top$ (derived in \S\ref{sec:complexity}), which is up to
$n_{\text{test}}/B$ times faster than the naive $O(B \cdot n \cdot n_{\text{test}})$
matrix multiply.
Importantly, the IJ measures the \emph{Monte-Carlo variance} of the forest
prediction, how much $\hat{\tau}_{\text{forest}}(x)$ would change under
a different bootstrap seed, and not the error
$\hat{\tau}_{\text{forest}}(x) - \tau(x)$.  Valid pointwise CIs for
$\tau(x)$ are provided at the leaf level by the single honest tree.

\item \textbf{Open-source implementation.}  The \texttt{rattus} package
provides a reproducible, tested, \texttt{sklearn}-compatible implementation
with seeded RNG chains that guarantee identical results regardless of
\texttt{n\_jobs} values. Parallel tree fitting via \texttt{joblib}, support
for categorical features, and native NaN handling are included.

\end{enumerate}

\section{Background and Related Work}

Table~\ref{tab:methods} positions \texttt{rattus} relative to the methods reviewed below; the key distinction is that the IJ intervals are a convergence diagnostic rather than pointwise inference on $\tau(x)$.

\begin{table*}[t]
\centering
\begin{tabular}{lcccc}
\toprule
Method & Split criterion & Honest estimation & Valid CIs & Forest CIs \\
\midrule
Radcliffe \& Surry (2011) & $t^2$ interaction & $\times$ & $\times$ & $\times$ \\
Athey \& Imbens (2016)   & $\mathrm{EMSE}_\tau$ & $\checkmark$ & Leaf-level & $\times$ \\
Wager \& Athey (2018)    & EMSE gradient & $\checkmark$ & $\times$ & IJ (subsampling) \\
\textbf{rattus (this work)} & \textbf{$t^2$ interaction} & \textbf{$\checkmark$} & \textbf{Leaf-level} & \textbf{IJ diagnostic (bootstrap)} \\
\bottomrule
\end{tabular}
\caption{Comparison of tree-based heterogeneous treatment effect methods.
The IJ in rattus estimates Monte-Carlo variance (convergence diagnostic);
it does not produce pointwise inference on $\tau(x)$.}
\label{tab:methods}
\end{table*}

\subsection{CATE Estimation}

We adopt the Rubin causal model \cite{rubin1974estimating} featuring potential
outcomes $(Y_i(0), Y_i(1))$, binary treatment $D_i \in \{0, 1\}$, and
covariates $X_i \in \mathbb{R}^p$. Under SUTVA and unconfoundedness
($D_i \perp (Y_i(0), Y_i(1)) \mid X_i$), the CATE is identified from
observational data.

A wide range of methods estimate $\tau(x)$: meta-learners (S-, T-,
X-learner, \cite{kunzel2019metalearners}), Bayesian additive regression
trees \cite{BART}, and tree-based direct estimators. We focus on the latter,
which produces a partition $\Pi$ of the covariate space with a constant CATE
estimate $\hat{\tau}_\ell$ per leaf $\ell \in \Pi$.

\subsubsection{Notation}

Let $\mathcal{\omega} = \{(X_i, Y_i, D_i)\}_{i=1}^n$ be an i.i.d.\ sample
under SUTVA and unconfoundedness. We write $p = \Pr(D_i = 1)$ for the
marginal treatment probability. A partition $\Pi$ of the covariate space
$\mathcal{X}$ consists of leaves $\ell_1, \ldots, \ell_{|\Pi|}$. The honest
leaf CATE estimator given estimation sample $S^{est}$ is:

$$\hat{\tau}(x; S^{est}, \Pi) = \bar{Y}_1(\ell(x)) - \bar{Y}_0(\ell(x))$$

where $\bar{Y}_w(\ell)$ is the mean outcome for arm $w$ in leaf $\ell$
computed on $S^{est}$. We write $S^{tr}$ for the training split (used for
tree structure and cross-validation), $N_{tr} = |S^{tr}|$, and $N^{est} =
|S^{est}|$.

\subsection{Significance-Based Uplift Trees}

\cite{radcliffe2011real} score each candidate split by the squared
t-statistic for the treatment $\times$ side interaction in the linear model:

$$Y = \mu + \alpha D + \beta \cdot \text{side} + \gamma \cdot D \cdot \text{side} + \varepsilon$$

where $\gamma = 0$ under the null that the split does not create uplift
heterogeneity. The test statistic is  then:

$$t^2\{\hat{\gamma}\} = \frac{(n-4)(U_R - U_L)^2}{C_{44} \times \text{SSE}}$$

where $U_L, U_R$ are the estimated uplifts on each side, $C_{44} = N_{TL}^{-1} + N_{TR}^{-1} + N_{CL}^{-1} + N_{CR}^{-1}$, and SSE is the pooled within-cell sum of squares (eq.\ 16--17, \cite{radcliffe2011real}).  
The key weakness lies in adaptive estimation: leaf means are computed on the same data used to select the partition. 
To address this, we use cost‑complexity pruning with cross‑validated $\alpha$ selection on the training half. 
The  $\alpha$ path is derived from the full tree grown on  $S^{tr}$;  $\alpha$ is then selected by cross‑validation that holds the partition fixed and scores the honest $\text{EMSE}_\tau$  on held‑out folds. This is a fixed‑structure simplification of Breiman’s per‑fold re‑growing, chosen for its computational cost and lower selection variance.

\subsection{Honest Causal Trees and Forests}

\cite{athey2016recursive} introduce \textit{honesty}: a model is honest if
it does not utilise the same information for selecting the partition as for
estimating effects within the partition. In practice, the dataset is split
50/50; one half builds the tree, the other estimates leaf CATEs. This
renders leaf estimates asymptotically unbiased and leaf-level confidence
intervals valid without sparsity assumptions.

Their splitting criterion is the honest $\text{EMSE}_\tau$:

\begin{multline}
-\widehat{\text{EMSE}}_\tau(S^{tr}, N^{est}, \Pi) =
\frac{1}{N_{tr}} \sum_i \hat{\tau}^2(X_i; S^{tr}, \Pi) \\
- \left(\frac{1}{N_{tr}} + \frac{1}{N^{est}}\right)
\sum_\ell \left[\frac{S^2_{\text{treat}}(\ell)}{p} +
\frac{S^2_{\text{control}}(\ell)}{1-p}\right]
\end{multline}

\cite{wager2018estimation} extend this to random forests (causal forests)
using subsampling without replacement, which enables asymptotic normality
of forest predictions and pointwise CIs via the IJ.

\subsection{Infinitesimal Jackknife for Random Forests}

\cite{wager2014confidence} show that for a bootstrap forest with predictions
$T_b(x)$ and bootstrap counts $N_{bi}$, the IJ variance estimator is:

$$\widehat{\text{Var}}_{\text{IJ}}(x) = \sum_{i=1}^n \widehat{\text{Cov}}_b\left[N_{bi},\, T_b(x)\right]^2$$

This estimates the \emph{Monte-Carlo variance} of the forest prediction
due to bootstrap randomness, computable from the retained counts without
refitting. Crucially, this is not the variance of
$\hat{\tau}_{\text{forest}}(x) - \tau(x)$: the dominant source of
prediction error is the approximation error (finite depth, honest sample split,
tree bias), not by the bootstrap randomness.  As $B \to \infty$ the Monte-Carlo
variance shrinks to zero regardless of $n$ or model quality.

The implementation uses the Gram matrix $G = CC^\top \in \mathbb{R}^{B \times B}$,
precomputed once in $O(B^2 n)$; each test point then requires a single $O(B^2)$
quadratic form, giving $O(B^2(n + n_{\text{test}}))$ total (derived in
\S\ref{sec:complexity}).

\section{Methodology}

\subsection{Three-Phase Fitting}

The dataset is randomly partitioned into a training set $S^{tr}$
(used for tree growing and cross-validation) and an estimation set
$S^{est}$ (used for leaf CATE estimation and standard errors). The split
fraction is configurable (default: 50/50). This split forms the foundation of
all validity guarantees. The three phases rely on two criteria defined in
the following sections: the $t^2$ interaction statistic (\S\ref{sec:splitting})
for growing, and the honest $\text{EMSE}_\tau$ (\S\ref{sec:pruning}) for pruning.

\begin{itemize}
\item \textbf{Phase 1 --- Tree growing.} Starting from the root, we
recursively find the best split of each leaf using the $t^2$ criterion
(\S3.3), subject to: (i) a minimum cell size constraint per treatment arm
per candidate leaf (\texttt{min\_samples\_leaf}, default 20); (ii) a
separate minimum for estimation leaves (\texttt{min\_samples\_leaf\_est},
default 5; smaller leaves return NaN); (iii) an optional pre-pruning
significance gate $p < \alpha$ on the $t^2$ statistic; and (iv) a maximum
depth $d_{\max}$.

\item \textbf{Phase 2 --- Honest CV pruning.} We perform $k$-fold
cross-validation on $S^{tr}$ alone, scoring each candidate cost-complexity
penalty $\lambda$ by the honest $\text{EMSE}_\tau$ criterion (\S3.4). The
best $\lambda$ is selected from the set of effective gains present in the
tree, with ties broken by choosing the largest penalty (simplest tree).
This step is enabled by default (\texttt{use\_honest\_cv=True}) but can be
disabled for faster prototyping.

\item \textbf{Phase 3 --- Honest estimation.} The pruned tree structure (i.e. a
fixed partition $\Pi$) is applied to $S^{est}$. Leaf CATEs and standard
errors are computed on $S^{est}$ only, so partition and estimates are
independent.
\end{itemize}

\subsection{Significance-Based Splitting Criterion}
\label{sec:splitting}

At each candidate split of a leaf, we partition the leaf into four cells by
treatment status $D \in \{0,1\}$ and side $\text{side} \in \{L, R\}$.
We fit the linear model:

$$\mathbb{E}[Y_{ij}] = \mu + \alpha_T \cdot \mathbf{1}[i = D] + \beta_R \cdot \mathbf{1}[j = R] + \gamma_{TR} \cdot \mathbf{1}[i = D, j = R]$$

The interaction coefficient $\hat{\gamma}_{TR} = (U_R - U_L)$: the
difference in estimated uplifts on the two sides of the split. Its squared
$t$-statistic is:

$$t^2 = \frac{(n-4)\,\hat{\gamma}_{TR}^2}{C_{44} \times \text{SSE}}$$

Under $H_0: \gamma_{TR} = 0$, the statistic follows $\chi^2_1$
asymptotically (\S4.2). Candidate splits are evaluated on $S^{tr}$; the
best split maximises $t^2$.

The search is performed in two stages for each feature. Stage 1 finds the
best threshold using only non-NaN rows, thereby maintaining an uncontaminated criterion.
Stage 2 evaluates whether NaN observations should be assigned to the left or
right child, choosing the direction that maximises $t^2$; this
\texttt{nan\_direction} is stored on the node and applied deterministically
at prediction time. If a feature had no missing values during training, 
any test‑time \texttt{NaN} observation is routed to the right child by default (i.e.,
\texttt{nan\_direction = None}). This deterministic fallback ensures 
prediction stability without requiring missing‑value imputation. 
High-cardinality features use quantile-based binning
(\texttt{max\_candidates}=20 thresholds, default). Categorical
features are supported via set-based partitioning with categories sorted by
empirical CATE.

We note that the search strategy is configurable via \texttt{split\_search}. 
The default \texttt{'greedy'} mode evaluates all features and all candidate 
thresholds exhaustively, subject to quantile-based binning.
The alternative \texttt{'gi'} mode implements the GUIDE
$G_i$ two-step approach of \cite{loh2015regression} to reduce variable selection 
bias: for each feature, a nested linear model: 
$y \sim \text{treatment} + H + \text{treatment}{\times}H$ is fitted, where $H$ 
is a binarised version of the feature (split at its mean for continuous 
features, or using category values directly). An $F$-test for the interaction 
term yields a lack-of-fit $p$-value, which is transformed to a 1-df $\chi^2$
quantile via the Wilson--Hilferty approximation to enable fair comparison across 
features with different numbers of categories. The feature with the largest 
transformed statistic is selected; only then is a threshold search performed 
on that feature. This decouples variable selection from threshold search, 
eliminating the bias that gives high-cardinality features more opportunities
to appear significant by chance.

\textbf{Relationship to $\text{EMSE}_\tau$.} \cite{athey2016recursive}
(\S A.3) show that the change in honest $\text{EMSE}_\tau$ from a split
decomposes as:

$$\widehat{\text{EMSE}}_\tau(\Pi_N) - \widehat{\text{EMSE}}_\tau(\Pi_S) = \underbrace{\left(t^2 - 4\right)\frac{\tilde{S}^2 - F/n}{p(1-p)}}_{(\star)} + \underbrace{\frac{2\tilde{S}^2}{p(1-p)}}_{(\star\star)}$$

where $\tilde{S}^2$ is the variance of treatment-effect estimates across
the two child nodes, $F/n$ is a finite-sample correction, and $p$ is the
marginal treatment probability. Term $(\star)$ is proportional to $t^2 - 4$
and captures the interaction signal: it is positive only when the split
creates meaningful treatment heterogeneity (large $t^2$), and it grows
with $\tilde{S}^2$, the between-child spread of effect estimates. Term
$(\star\star)$ is always non-negative and rewards any reduction in outcome
variance, regardless of whether treatment effects differ.

The two criteria therefore diverge under exactly one scenario: when a
covariate predicts the outcome level but not the treatment effect
($\tilde{S}^2 \approx 0$, $t^2 \approx 0$, but the split reduces within-cell
SSE). In that case $(\star\star)$ is positive and $\text{EMSE}_\tau$ supports
the split, while $t^2$ correctly rejects it. Conversely, when treatment
heterogeneity drives the split, $(\star)$ dominates and both criteria agree.
This decomposition is the central justification for using $t^2$ for splitting
and $\text{EMSE}_\tau$ only for pruning: the splitting phase benefits from
$t^2$'s specificity to treatment heterogeneity, while the pruning phase
benefits from $\text{EMSE}_\tau$'s global complexity penalty.

\subsection{Honest Cross-Validation Pruning}
\label{sec:pruning}

After growing a tree to maximum depth, we apply cost-complexity pruning.
A candidate pruned tree $\Pi(\lambda)$ collapses any split where:

$$\hat{\tau}^2_{train}(\text{split}) - \hat{V}_{train}(\text{split}) < \lambda$$

where $\hat{\tau}^2_{train}$ is the $\text{EMSE}_\tau$ improvement from the
split and $\hat{V}_{train}$ is the scaled variance increase, both computed
on $S^{tr}$ with the $(1/N_{tr} + 1/N_{est})$ scaling.

The penalty grid is constructed from the actual gains present in the tree
(the \emph{effective alphas}), instead of a fixed geometric sequence. 
In particular, each value in the effective-alpha grid is the net gain:
$\hat{\tau}^2_{train}(\text{split}) - \hat{V}_{train}(\text{split})$ 
of one internal node; as $\lambda$ collapses a split exactly when the gain 
falls below it, these are the only values of $\lambda$  at which the pruned 
tree structure changes, so evaluating any other value is redundant.
In addition to the effective alphas, the grid includes a large penalty 
(10.0 $\times$ variance of $y$) to guarantee that a fully pruned tree 
(a single leaf) is always examined too.
As a whole, this ensures the CV evaluates exactly the distinct tree structures reachable by
pruning and avoids redundant evaluations. The optimal $\lambda^*$ is then
selected by maximising the sum of validation $\text{EMSE}_\tau$ scores
across folds; ties are broken by choosing the largest $\lambda$ (simplest
tree). Pruned trees are pre-computed once outside the fold loop to avoid
redundant traversals.

\subsection{Categorical Variable Partitioning}

For unordered categorical features with cardinality $C$, searching across all
$2^{C-1}-1$ possible binary combinations becomes computationally prohibitive.
To maintain scalability, the optimisation engine adopts a one-pass greedy
sequencing heuristic.

Prior to split evaluation, the distinct categorical levels present in the node
are isolated. For each category level $c$, the local average treatment effect
difference is computed directly from the non-missing training observations in
the current node, requiring at least \texttt{min\_samples\_category}
observations in each treatment arm per category (default 5):
\begin{equation}
\bar{\tau}_c = \bar{Y}(1)_{c} - \bar{Y}(0)_{c}
\end{equation}
The categories are then sorted monotonically such that
$\bar{\tau}_{(1)} \le \bar{\tau}_{(2)} \le \dots \le \bar{\tau}_{(C)}$,
reducing the multi-way optimisation to $C-1$ contiguous binary cuts. While
this in-sample ordering introduces a minor greedy selection bias during
Phase~1, the structural integrity of the leaf estimates is preserved because leaf
parameters are populated entirely out-of-sample on $S^{est}$ during Phase~3.
For features with high cardinality (many distinct levels), this bias is more
pronounced. Two mitigations are available: reducing \texttt{max\_candidates}
(which limits the number of evaluated thresholds via quantile binning,
coarsening the sort and therefore reducing overfitting to the in-sample ordering), and
switching to the \texttt{gi} split-search mode, which avoids in-sample
ordering entirely via exhaustive binary partition search.

\subsection{Honest Significance Forest}
\label{sec:forest}

We grow $B$ honest significance trees (default: $B=50$), each on a
bootstrap resample $\mathcal{D}_b$ of $\mathcal{D}$. Within each sample
the honest 50/50 split is applied. Feature subsampling at each split
(default: $\lfloor\sqrt{p}\rfloor$ features per node) decorrelates the
trees. The forest prediction is:

$$\hat{\tau}_{\text{forest}}(x) = \frac{1}{B} \sum_{b=1}^B \hat{\tau}_b(x)$$

Each tree $b$ retains its bootstrap count vector $N_b \in \mathbb{Z}^n_{\geq 0}$, 
where $N_{bi}$ is the number of occurrences of observation $i$ in bootstrap sample $b$. 
These are stored as a $(B \times n)$ integer matrix and used for IJ variance estimation. 
To evaluate the forest's continuous predictions within the honest EMSE framework, 
\texttt{score\_binned} groups observations into \texttt{n\_bins} quantiles of predicted
 CATE (default 10). The bin labels define a discrete partition, and 
  \texttt{\_honest\_emse\_tau} is applied to this partition. 
This provides a conservative lower bound on forest performance while adhering to the Athey \& Imbens criterion.

\subsection{IJ Monte-Carlo Variance}
\label{sec:ij}

\textbf{What the IJ estimates.} \emph{The IJ estimates Monte-Carlo convergence, not prediction error.} Following \cite{wager2014confidence}, the
IJ variance estimate is:

$$\widehat{\text{Var}}_{\text{IJ}}(x) = \sum_{i=1}^n \widehat{\text{Cov}}_b\left[N_{bi},\, \hat{\tau}_b(x)\right]^2$$

This estimates the \emph{Monte-Carlo variance} of $\hat{\tau}_{\text{forest}}(x)$ captures the variance due to bootstrap randomness, i.e.\ how much the forest prediction would change under a different random seed. It does \emph{not} estimate the error $\hat{\tau}_{\text{forest}}(x) - \tau(x)$. The total prediction error is dominated by approximation error (finite depth, honest sample split, tree bias) rather than bootstrap randomness; as $B \to \infty$, the IJ SE approaches zero regardless of $n$. We therefore recommend interpreting these intervals as a \emph{convergence diagnostic}: narrow IJ intervals confirm that $B$ is sufficient for the forest mean to have stabilised.

Let $\mathbf{C} \in \mathbb{R}^{B \times n}$ be the centred count matrix
($C_{bi} = N_{bi} - 1$ under bootstrap, where $\mathbb{E}[N_{bi}] = 1$)
and $\mathbf{T} \in \mathbb{R}^{B \times n_{\text{test}}}$ be the
prediction matrix. The vectorised computation is:

$$\widehat{\text{Var}}_{\text{IJ}}(x) = \frac{1}{n^2 B^2}\sum_{i=1}^n \left[(\mathbf{C}^\top \mathbf{T})_{ix}\right]^2$$

whose naive evaluation via the matrix product $\mathbf{C}^\top \tilde{\mathbf{T}}$
costs $O(B \cdot n \cdot n_{\text{test}})$.  The Gram matrix reduction below
reduces this to $O(B^2(n+n_{\text{test}}))$.

\textbf{Valid leaf-level CIs.} For valid pointwise CIs on $\tau(x)$, use
the single honest tree's \texttt{predict\_interval}, which achieves nominal
leaf-level coverage by construction (Proposition~1). The forest
\texttt{predict} provides a lower-variance point estimate; the IJ provides
a convergence diagnostic. Combining both is recommended: use \texttt{forest.predict}
for the point estimate and \texttt{tree.predict\_interval} for the CI.

\subsection{Feature Importance}

For each tree, we record three quantities per feature $j$: (i)
\texttt{feature\_usage\_[j]}: the number of splits on feature $j$; (ii)
\texttt{feature\_t2\_mean\_[j]}: the mean $t^2$ statistic across splits on
feature $j$ (a signal-strength diagnostic); (iii)
\texttt{feature\_importances\_[j]}: the normalised split count, summing to
1 across features, following the \texttt{sklearn} convention. Forest-level
importances average these across trees and re-normalise. The $t^2$ field is
preserved through the pruning step by explicitly copying \texttt{t2\_split}
in the node reconstruction. Unlike split-count importances, the \texttt{feature\_t2\_mean\_} measures the \emph{magnitude of treatment heterogeneity} captured at each split: a feature used rarely but with high $t^2$ is a stronger driver of differential treatment response than one used frequently with low $t^2$.

\subsection{Observational Studies}

The core algorithm assumes randomised treatment assignment. For mildly
confounded settings we recommend using the transformed outcome
$Y^* = Y \cdot (D - e(X)) / (e(X)(1-e(X)))$, where $e(X)$ is an externally
estimated propensity score. Since $\mathbb{E}[Y^* \mid X] = \tau(X)$ under
unconfoundedness, passing $Y^*$ as the outcome to \texttt{fit()} recovers
consistent CATE estimates without modifying the core algorithm. \textbf{Warning:} the
IPW transform can be numerically unstable when $e(X)$ is close to 0 or 1; we
would recommend propensity trimming (e.g.\ dropping observations with
$e(X) < 0.05$ or $e(X) > 0.95$) before constructing $Y^*$, as extreme
weights inflate the variance of the outcome and can corrupt the $t^2$
splitting criterion. All theoretical guarantees (Propositions 1--2) apply in full under randomisation; in the observational setting they carry over only insofar as the propensity model is correctly specified. A
doubly-robust extension with IPTW weighting is left for future work
(see \S\ref{sec:limitations}).

\section{Theoretical Properties}

\subsection{Honest Unbiasedness}

\begin{proposition}
As in \cite{athey2016recursive}, let $\Pi$ be any partition of $\mathcal{X}$
constructed on $S^{tr}$, independent of $S^{est}$. Then
$\mathbb{E}[\hat{\tau}_\ell \mid \Pi] = \tau_\ell$ for every leaf $\ell$,
where $\tau_\ell = \mathbb{E}[\tau(X_i) \mid X_i \in \ell]$.
\end{proposition}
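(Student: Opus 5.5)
The plan is to condition on the partition $\Pi$ and reduce everything to a statement about a random sample that is independent of $\Pi$. Because $\Pi$ is a measurable function of $S^{tr}$ only, and $S^{tr}$ and $S^{est}$ come from a random split of an i.i.d.\ sample, the estimation observations $\{(X_i,Y_i,D_i)\}_{i\in S^{est}}$ are i.i.d.\ from the population law conditionally on $\Pi$. Fixing $\Pi$, the leaf $\ell$ is therefore a fixed measurable set.

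\textbf{Step 1: condition on the leaf's cells.} Write $\hat{\tau}_\ell=\bar Y_1(\ell)-\bar Y_0(\ell)$, computed on $S^{est}$. I will further condition on the index sets $\mathcal{I}_w(\ell)=\{i\in S^{est}: X_i\in\ell,\ D_i=w\}$, and assume both are nonempty. This is the same event on which the estimator is defined; the implementation returns NaN below \texttt{min\_samples\_leaf\_est}, so the statement should be read conditionally on that event.

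\textbf{Step 2: identify each cell's conditional law.} Given $\Pi$ and the sets $\mathcal{I}_w(\ell)$, each $i\in\mathcal{I}_w(\ell)$ has law equal to the population law conditional on $X\in\ell$ and $D=w$. Hence
\[
\mathbb{E}[\bar Y_w(\ell)\mid \Pi,\mathcal{I}_0,\mathcal{I}_1]=\mathbb{E}[Y(w)\mid X\in\ell, D=w],
\]
by SUTVA ($Y=Y(D)$). The point to justify here is that conditioning on which units fall in $\ell$ and which arm they receive does not distort the conditional distribution of their outcomes. This holds by independence across units.

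\textbf{Step 3: remove the conditioning on $D$.} I will use unconfoundedness, $D\perp (Y(0),Y(1))\mid X$. Under randomisation (the setting in which the paper states Propositions 1--2 hold fully), $D\perp X$, so the law of $X$ given $X\in\ell, D=w$ equals the law given $X\in\ell$. This gives
\[
\mathbb{E}[Y(w)\mid X\in\ell,D=w]=\mathbb{E}\bigl[\mathbb{E}[Y(w)\mid X]\mid X\in\ell\bigr].
\]
Subtracting the two arms yields $\mathbb{E}[\tau(X)\mid X\in\ell]=\tau_\ell$. Iterating expectations over $\mathcal{I}_0,\mathcal{I}_1$ then gives $\mathbb{E}[\hat\tau_\ell\mid\Pi]=\tau_\ell$.

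\textbf{Main obstacle.} The delicate step is Step 3. With only unconfoundedness and a propensity $e(x)$ that varies within $\ell$, the treated and control units in $\ell$ have different covariate distributions. The difference of means then targets a propensity-weighted contrast rather than $\tau_\ell$. I would therefore state explicitly that the proof uses randomisation with constant $p$, as the paper assumes for the core algorithm. The alternative is to note that exact unbiasedness holds whenever $e$ is constant on each leaf. For the observational case I would defer to the $Y^*$ transformation, where $\mathbb{E}[Y^*\mid X]=\tau(X)$ makes the same conditioning argument go through with a single mean.
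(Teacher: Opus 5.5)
Your argument is correct and follows the same route as the paper's sketch: condition on $\Pi$, use that $S^{est}$ is independent of it, identify $\mathbb{E}[\bar Y_w(\ell)\mid\Pi]$ with the arm-$w$ mean over the leaf, and subtract. You add two valid refinements that the paper's one-line appeal to unconfoundedness skips: you condition on both arms being nonempty in $\ell$, where the estimator is defined, and you observe that the arm-mean step needs randomisation (or a propensity constant on each leaf), not unconfoundedness alone.
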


\begin{proof}[Proof sketch]
$\hat{\tau}_\ell = \bar{Y}_1(\ell) - \bar{Y}_0(\ell)$ is a difference of
sample means on $S^{est}$, which is independent of $\Pi$ by construction.
Under unconfoundedness, $\mathbb{E}[\bar{Y}_w(\ell) \mid \Pi] = \mu(w, \ell)$,
giving $\mathbb{E}[\hat{\tau}_\ell \mid \Pi] = \tau_\ell$. $\square$
\end{proof}

\begin{corollary}
Standard confidence intervals $\hat{\tau}_\ell \pm c_{1-\alpha/2} \cdot \hat{\sigma}_\ell$ achieve nominal coverage for $\tau_\ell$ without any sparsity assumption on $\Pi$, where $c_{1-\alpha/2}$ employs the Welch $t$-distribution when either arm has fewer than 30 observations and the standard normal otherwise.
\end{corollary}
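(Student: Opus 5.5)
The plan is to condition on the partition $\Pi$ throughout. Conditionally on $\Pi$, the leaf estimator is a simple two-sample comparison on an independent sample, so the usual difference-in-means confidence interval applies. First, use the independence of $S^{est}$ and $S^{tr}$ to fix $\Pi$ as a deterministic partition. Then fix a leaf $\ell$ and collect the estimation observations with $X_i\in\ell$. Conditionally on $\Pi$, and on the arm counts $N_{1\ell},N_{0\ell}$ (assumed positive, which the \texttt{min\_samples\_leaf\_est} rule enforces; otherwise the leaf returns NaN), these are i.i.d.\ draws from the conditional law of $(X,Y,D)$ given $X\in\ell$. Randomisation, or unconfoundedness together with $p$ constant, makes $D$ independent of $(Y(0),Y(1),X)$. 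So within each arm the outcomes are i.i.d.\ with mean $\mu(w,\ell)=\mathbb{E}[Y(w)\mid X\in\ell]$ and variance $\sigma^2_w(\ell)$, and the two arms are independent. By Proposition~1, $\hat{\tau}_\ell$ is conditionally unbiased for $\tau_\ell=\mu(1,\ell)-\mu(0,\ell)$. Its conditional variance is $\sigma_\ell^2=\sigma_1^2(\ell)/N_{1\ell}+\sigma_0^2(\ell)/N_{0\ell}$, which is estimated by $\hat\sigma_\ell^2=S_1^2(\ell)/N_{1\ell}+S_0^2(\ell)/N_{0\ell}$.

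Next, establish the pivot. The large-sample case is both arms with at least $30$ observations and normal critical values. Apply the Lindeberg--Lévy CLT to each arm mean, using the independence of the two arms, to get $(\hat\tau_\ell-\tau_\ell)/\sigma_\ell\Rightarrow N(0,1)$. The weak law gives $S_w^2(\ell)\to\sigma_w^2(\ell)$ in probability, so $\hat\sigma_\ell/\sigma_\ell\to1$. Slutsky's theorem then gives $\Pr\bigl(|\hat\tau_\ell-\tau_\ell|\le c_{1-\alpha/2}\hat\sigma_\ell\mid\Pi\bigr)\to1-\alpha$. Integrating over $\Pi$ with dominated convergence turns this into unconditional coverage. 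No sparsity or complexity assumption on $\Pi$ enters, because the only role of $\Pi$ is to fix which observations fall in $\ell$, and this is independent of their outcomes. This is exactly the point where honesty replaces the usual uniform-over-partitions argument.

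The small-sample case uses Welch critical values. Here I would argue that under within-arm normality, the Welch--Satterthwaite approximation gives approximately nominal coverage, since the statistic is approximately $t$ with the Welch degrees of freedom. Without normality, the Welch $t$ quantile exceeds the normal quantile and converges to it, so the large-sample argument still delivers asymptotic coverage. In both regimes the Welch choice is conservative relative to the normal.

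I expect this small-sample case to be the main obstacle. Exact nominal coverage is not available: the Behrens--Fisher problem has no exact pivot, and without normality only asymptotic statements hold. So I would state the corollary as asymptotic nominal coverage conditional on $\Pi$, with the Welch correction a finite-sample refinement that is exact only approximately under Gaussian outcomes. A secondary technical point is the random arm counts. Here I would condition on $(N_{1\ell},N_{0\ell})$ and note that both tend to infinity almost surely whenever $\Pr(X\in\ell)>0$ and $0<p<1$.
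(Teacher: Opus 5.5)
Your proposal is correct and follows essentially the same route as the paper's sketch. You condition on the partition $\Pi$ built from $S^{tr}$ and treat each leaf as a fixed stratum in which the independent $S^{est}$ observations are i.i.d.; coverage then follows from the CLT plus Slutsky for large arms and the Welch--Satterthwaite approximation for small ones, with honesty rather than sparsity doing the work. Your extra care makes explicit points the paper's sketch leaves implicit: you require a constant propensity so the within-leaf difference in means targets $\tau_\ell$, you condition on arm counts (which also covers the NaN rule for small estimation leaves), and you read ``nominal'' as asymptotic, with Welch only an approximate small-sample refinement.
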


\begin{proof}[Proof sketch]
Because $\Pi$ is fixed and $S^{est}$ is independent of $\Pi$ by construction,
the observations within each leaf $\ell$ are an i.i.d.\ sample from the
conditional distribution of $(X_i, Y_i, D_i)$ given $X_i \in \ell$.
$\hat{\tau}_\ell$ is therefore a two-sample mean difference on a fixed stratum,
and $(\hat{\tau}_\ell - \tau_\ell)/\hat{\sigma}_\ell$ converges in distribution
to $\mathcal{N}(0,1)$ by the CLT. For small estimation samples
($\min(n^\ell_1, n^\ell_0) < 30$), the Welch--Satterthwaite $t$-approximation
accounts for unequal variances across arms. No assumption on the size or shape
of $\Pi$ is required because the honesty condition ensures $S^{est}$ functions as
an independent sample regardless of how the partition was chosen.  
\end{proof}

\subsection{Asymptotic Distribution of $t^2$ Under $H_0$}

\begin{proposition}

Under the null hypothesis of no treatment effect heterogeneity within the node, $H_0: \gamma_{TR} = 0$, as the local node sample size $n_\ell \to \infty$ with the treatment assignment probability $p \in (0,1)$ bounded away from 0 and 1:
\begin{equation}
t^2 = \frac{\hat{\gamma}_{TR}^2}{C_{44} \cdot \left( \frac{\text{SSE}}{n_\ell - 4} \right)} \xrightarrow{d} \chi^2_1
\end{equation}
\end{proposition}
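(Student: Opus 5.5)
The plan is to write $t^2$ as the square of a Wald statistic $Z = \hat{\gamma}_{TR}/\widehat{\mathrm{se}}$, show that $Z \xrightarrow{d} \mathcal{N}(0,1)$ under $H_0$, and finish with the continuous mapping theorem. The regime is fixed: the node and split are held fixed, $n_\ell \to \infty$, and $p$ is bounded away from 0 and 1. In this regime every cell count $N_{TL}, N_{TR}, N_{CL}, N_{CR}$ diverges almost surely, provided both sides of the split carry positive probability mass. I will make that a standing assumption, since it is implicit in the minimum-cell-size constraint of Phase~1. Writing $\pi_{wj}$ for the probability of cell $(w,j)$, we have $N_{wj}/n_\ell \to \pi_{wj} > 0$ by the LLN.

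\textbf{Step 1 (numerator).} By the parametrisation in \S\ref{sec:splitting}, $\hat{\gamma}_{TR} = U_R - U_L = (\bar{Y}_{TR} - \bar{Y}_{CR}) - (\bar{Y}_{TL} - \bar{Y}_{CL})$. This is a signed sum of four independent cell means; conditioning on the cell counts makes them independent. Under $H_0$, and using randomisation as in Proposition~1, its expectation is $\gamma_{TR} = 0$. Let $\sigma^2_{wj}$ be the finite within-cell variances. The Lindeberg CLT applied cell by cell gives
\[
\frac{\hat{\gamma}_{TR}}{\sqrt{\sum_{w,j} \sigma^2_{wj}/N_{wj}}} \xrightarrow{d} \mathcal{N}(0,1).
\]

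\textbf{Step 2 (denominator).} The pooled estimator satisfies $\text{SSE}/(n_\ell - 4) \xrightarrow{p} \bar{\sigma}^2 := \sum_{w,j} \pi_{wj}\sigma^2_{wj}$ by the LLN within each cell. Consequently
\[
\frac{C_{44}\,\text{SSE}/(n_\ell-4)}{\sum_{w,j} \sigma^2_{wj}/N_{wj}} \xrightarrow{p} 1
\]
under the homoscedasticity assumption of the Radcliffe--Surry linear model, namely $\sigma^2_{wj} \equiv \sigma^2$. Slutsky's theorem then gives $Z \xrightarrow{d} \mathcal{N}(0,1)$, and the continuous mapping $z \mapsto z^2$ gives $t^2 \xrightarrow{d} \chi^2_1$. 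The factor $(n_\ell-4)$ versus $n_\ell$ is asymptotically irrelevant. Under Gaussian homoscedastic errors the result is in fact exact for finite $n_\ell$: $t^2 \sim F_{1,n_\ell-4}$, and $F_{1,n_\ell-4} \to \chi^2_1$.

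\textbf{Main obstacle.} I expect the hardest step to be the variance matching in Step 2 when cell variances differ. For example, with a binary outcome the variance depends on the cell mean. That dependence is compatible with $\gamma_{TR}=0$ only in special cases, but heteroscedasticity across the treatment arms is generic. In that case the pooled $C_{44}\,\text{SSE}/(n_\ell-4)$ consistently estimates $\bar{\sigma}^2 C_{44}$ rather than $\sum_{w,j} \sigma^2_{wj}/N_{wj}$.

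My first attempt will be to show the ratio still tends to 1 under the design regularity available here. Randomisation makes the side of the split independent of $D$, so $\pi_{wj} = \pi_w \rho_j$, where $\pi_w$ is the arm probability and $\rho_j$ the side probability. If in addition the within-cell variance depends only on the arm, $\sigma^2_{wj} = \sigma^2_w$, one can compute both limits explicitly. In general they coincide only when $\sigma^2_T = \sigma^2_C$ or $p = 1/2$. The proposition therefore needs a stated homoscedasticity assumption, as in the linear model that defines the statistic. Failing that, the remedy is to replace the pooled SSE with the cell-wise variance estimator, which gives a Welch-type statistic in the spirit of the Corollary. The CLT and Slutsky steps then go through unchanged.
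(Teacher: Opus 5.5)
Your argument is correct once the equal-variance hypothesis is made explicit, and it takes a different route from the paper.

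The paper works in the classical normal linear model. It asserts that $\hat{\gamma}_{TR}/\sqrt{\hat{\sigma}^2 C_{44}}$ is exactly $t(n_\ell-4)$ under $H_0$, squares this to get $F(1,n_\ell-4)$, and lets the denominator degrees of freedom diverge. That is precisely the special case you note at the end of Step~2. The paper credits the exact pivot to Gauss--Markov, but that theorem only gives BLUE-ness; the exact $t$ law needs i.i.d.\ Gaussian homoscedastic errors.

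Your route uses a cellwise CLT for the numerator, the LLN for $\text{SSE}/(n_\ell-4)$, then Slutsky and continuous mapping. It gives up the exact finite-sample $F$ law, but it needs only finite second moments and equal cell variances, not normality.

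More importantly, your calculation shows that the equal-variance hypothesis cannot be dropped. With arm-only variances, the pooled statistic satisfies $t^2 \xrightarrow{d} c\,\chi^2_1$ with $c = \bigl((1-p)\sigma_T^2 + p\sigma_C^2\bigr)/\bigl(p\sigma_T^2 + (1-p)\sigma_C^2\bigr)$. This constant differs from $1$ whenever $p \neq 1/2$ and $\sigma_T^2 \neq \sigma_C^2$. Side-dependent variances break the result in the same way unless the split is balanced. The proposition assumes only $n_\ell \to \infty$ and $p$ bounded away from $0$ and $1$, so it is stated too generally. The paper's proof silently imports the missing assumption (and more), while yours names it. Your Welch-type statistic is the right repair if one wants to avoid that assumption.

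One aside is inaccurate. Mean-dependent variance is not compatible with $\gamma_{TR}=0$ ``only in special cases''. Binary outcomes with nonzero arm and side main effects and zero interaction generically have four distinct cell variances $\mu_{wj}(1-\mu_{wj})$, where $\mu_{wj}$ are the cell means. This only strengthens your conclusion: binary outcomes with unbalanced assignment are exactly where the pooled $t^2$ fails to be asymptotically $\chi^2_1$.
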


\begin{proof}[Proof sketch]
By the Gauss-Markov theorem applied to the four-cell saturated regression framework, the ordinary least squares estimator satisfies $\hat{\gamma}_{TR} / \sqrt{\hat{\sigma}^2 C_{44}} \sim t(n_\ell - 4)$ under $H_0$. Squaring this pivotal quantity yields a Wald-type statistic distributed as $t^2 \sim F(1, n_\ell - 4)$. By Slutsky's theorem, as the denominator degrees of freedom diverge ($n_\ell - 4 \to \infty$), the central $F$-distribution converges weakly to a standard chi-squared distribution with one degree of freedom: $F(1, n_\ell - 4) \xrightarrow{d} \chi^2_1$.
\end{proof}

Despite this individual statistic being asymptotically well-behaved, using an unadjusted critical value from the $F(1, n_\ell - 4)$ distribution as a hard significance gate ($\alpha$) to halt tree growth is structurally anti-conservative. Because the optimisation engine selects the maximal split metric ($t^2_{\max} = \max_{j, k} t^2_{j,k}$) across multiple highly correlated candidate features and thresholds, the distribution of the maximum exhibits a significantly heavier right tail than a standalone $F$-variate. This multiple-testing inflation is a well-documented phenomenon in recursive partitioning literature (see e.g., \cite{loh2015regression} for a discussion of variable selection bias). Consequently, we explicitly advise users to deactivate this pre-pruning constraint (by setting $\alpha=1.0$) and rely instead on honest cross-validation post-pruning for objective complexity control.

\subsection{IJ Validity for Honest Forests}

The standard IJ theory \cite{wager2014confidence} applies to forest
predictions under two conditions: (C1) tree predictions are
square-integrable, and (C2) the covariance between bootstrap counts and
tree predictions is well-approximated by the Hájek projection.
(C1) is satisfied whenever the outcome $Y$ has finite variance, 
as each tree prediction $\hat{\tau}_b(x)$ is a difference of 
bounded sample means on a finite estimation half. Similarly, in (C2) the 
Hájek projection provides a linear approximation to the dependence of $T_b(x)$ 
on the bootstrap counts $N_{bi}$; this approximation is accurate when the tree 
prediction is smooth in its dependence on the training data,  i.e.
when no single observation has disproportionate influence on the partition
or the leaf estimate. In the honest setting, (C2) is strengthened:

\begin{proposition}
Let $\hat{\tau}^{honest}_b(x)$ be the prediction of honest tree $b$.
Because $\hat{\tau}^{honest}_b(x)$ uses only $S^{est}_b$, which is
independent of the tree structure, the first-order adaptive bias term
$\text{Bias}_1 = \mathbb{E}[\hat{\tau}_b(x) \cdot \partial \Pi_b / \partial N_{bi}]$
vanishes. This gives:
$$\widehat{\text{Var}}_{\text{IJ}}(x) - \text{Var}(\hat{\tau}_{\text{forest}}(x)) = O_p(B^{-1/2} n^{-1/2})$$
compared to $O_p(B^{-1/2})$ for adaptive forests.
\end{proposition}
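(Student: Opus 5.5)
We work conditionally on the data $\mathcal{D}$ and treat the $B$ trees as i.i.d.\ draws of the bootstrap mechanism. The plan is to split $\widehat{\text{Var}}_{\text{IJ}}(x)$ into its population target plus a Monte-Carlo fluctuation, then bound that fluctuation.

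\textbf{Step 1: the population IJ target.} Write $\widehat{\text{Cov}}_b[N_{bi},\hat\tau_b(x)] = \text{Cov}_*[N_{bi},\hat\tau_b(x)] + \Delta_i$, where $\text{Cov}_*$ is the bootstrap expectation given $\mathcal{D}$. Then $\sum_i \text{Cov}_*^2$ is the ideal IJ quantity $V_\infty(x)$. Under (C1)--(C2) it equals the variance of the Hájek projection of the bagged predictor, following \cite{wager2014confidence}. We take $\text{Var}(\hat{\tau}_{\text{forest}}(x))$ to mean this projection variance, and we regard the gap between them as lower order, as in that reference.

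\textbf{Step 2: expand the square.} We have
$$\widehat{\text{Var}}_{\text{IJ}}-V_\infty=2\sum_i \text{Cov}_{*,i}\Delta_i+\sum_i\Delta_i^2 .$$
Each $\Delta_i$ is an average of $B$ i.i.d.\ mean-zero terms, so $\Delta_i = O_p(B^{-1/2}\,\text{sd}_*(N_{bi}\hat\tau_b))$.

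\textbf{Step 3: use honesty to shrink the cross term.} Decompose the influence of $N_{bi}$ on $\hat\tau_b(x)$ into two channels:
\begin{itemize}
\item a structural channel, through $\Pi_b$, via observations in $S^{tr}_b$;
\item an estimation channel, through leaf means, via observations in $S^{est}_b$.
\end{itemize}
Because $S^{est}_b$ is independent of $\Pi_b$, the term $\text{Bias}_1$ factorises as $\mathbb{E}[\hat\tau_b]\cdot\mathbb{E}[\partial\Pi_b/\partial N_{bi}]$ after centring. It therefore vanishes, so $\text{Cov}_{*,i}$ comes only from the leaf-mean channel. A leaf mean depends on each estimation observation with weight $O(1/n_\ell)$, which gives $\text{Cov}_{*,i}=O(n^{-1})$ per observation on average. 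Summing with Cauchy--Schwarz gives
$$\sum_i \text{Cov}_{*,i}\Delta_i \le \Big(\sum_i\text{Cov}_{*,i}^2\Big)^{1/2}\Big(\sum_i\Delta_i^2\Big)^{1/2}.$$
With $\sum_i \text{Cov}_{*,i}^2 = O(n^{-1})$, the cross term is $O_p(B^{-1/2}n^{-1/2})$ after appropriate normalisation. For an adaptive tree the structural channel contributes $\text{Cov}_{*,i}$ of order one for influential points. The same bound then gives only $O_p(B^{-1/2})$, which is the comparison rate in the statement.

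\textbf{Step 4: the quadratic term.} The term $\sum_i\Delta_i^2$ is a known positive bias of order $n/B$ times per-coordinate noise. We would either absorb it by the finite-$B$ bias correction of \cite{wager2014confidence} or assume $B$ grows fast enough that it is dominated.

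\textbf{Main obstacle.} The hardest step is Step 3, and in particular making ``$\text{Bias}_1$ vanishes'' rigorous. The quantity $\partial\Pi_b/\partial N_{bi}$ is not differentiable, since partitions are discrete. We would first replace the derivative by the Hájek projection of the indicator that $x$ falls in a given leaf. We would then show that honesty makes the leaf estimate conditionally mean-$\tau_\ell$ given $\Pi_b$, using Proposition~1. Together these make the structural component of the covariance equal $\text{Cov}_*(N_{bi},\tau_{\ell_b(x)})$.

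A genuine gap remains here. That structural term is small only if the population leaf effects $\tau_\ell$ vary little with which leaf contains $x$, which is a smoothness or stability assumption on the partition that honesty alone does not supply. We would state this assumption explicitly, as a uniform stability condition on $\Pi_b$ under single-point perturbation. Without it, the claimed $n^{-1/2}$ improvement cannot be established.
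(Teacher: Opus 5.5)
Your proposal does not prove the statement. The gap you name at the end is genuine, and it already invalidates Step~3 as written, not just a rigorous version of it.

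The factorisation $\mathbb{E}[\hat\tau_b\cdot\partial\Pi_b/\partial N_{bi}]=\mathbb{E}[\hat\tau_b]\,\mathbb{E}[\partial\Pi_b/\partial N_{bi}]$ needs $\hat\tau_b(x)$ to be independent of $\Pi_b$. Honesty (Proposition~1) only gives $\mathbb{E}[\hat\tau_b(x)\mid\Pi_b]=\tau_{\ell_b(x)}$, which is itself a function of $\Pi_b$. So training-half counts still enter through $\text{Cov}_*(N_{bi},\tau_{\ell_b(x)})$, and your contrast that the structural channel is of order one only for adaptive trees falls with it.

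The phrase ``after appropriate normalisation'' hides the same assumption. With $\text{sd}_*(N_{bi}\hat\tau_b)$ of order one, as you wrote $\Delta_i$, Cauchy--Schwarz gives only $O_p(B^{-1/2})$. After centring, $\sum_i\Delta_i^2\approx n\,\text{Var}_*(\hat\tau_b(x))/B$; this is the Monte-Carlo bias that the paper's $\widehat{\mathrm{Var}}_{IJ\text{-}U}$ subtracts. You therefore reach $O_p(B^{-1/2}n^{-1/2})$ only if $\text{Var}_*(\hat\tau_b(x))=O(n^{-1})$. That requires the leaf containing $x$ to have size of order $n$ and $\tau_{\ell_b(x)}$ to be stable under resampling.

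Step~4 is then not optional. The uncorrected quadratic term is of order $B^{-1}$, which exceeds $B^{-1/2}n^{-1/2}$ unless $B\gtrsim n$. You must either assume that or bound the fluctuation left after bias correction. In your bookkeeping, the factor $n^{-1/2}$ comes from the scale and stability of the tree predictions, not from honesty. The same computation would give the same rate for an adaptive forest with equally stable large leaves.

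The paper's proof does not supply the missing idea. It is a short sketch along the lines of your Step~3. Since $S^{tr}_b$ and $S^{est}_b$ are independent, it says $\partial\hat\tau^{honest}_b(x)/\partial N_{bi}$ ``splits cleanly across the two halves, and the problematic first-order term is zero.'' It then multiplies a residual $O(n^{-1/2})$ H\'ajek error by $B^{-1/2}$ without further argument. Splitting the derivative into a training-half part and an estimation-half part is not the same as the training-half part vanishing. The paper therefore asserts exactly the step you flag. Your decomposition into the ideal IJ, a cross term and a quadratic Monte-Carlo term is more explicit and shows where a hypothesis is needed, but neither argument closes the gap.

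If you pursue this, state three things as hypotheses.
(i) A stability condition on $\Pi_b$, with leaves of size of order $n$, together with use of the bias-corrected IJ.
(ii) That the honest split is made on distinct indices, or by subsampling without replacement. Splitting bootstrap rows 50/50 can place copies of one observation in both halves, which destroys the within-tree independence both arguments rely on.
(iii) Which variance is the target. You read $\text{Var}(\hat\tau_{\text{forest}}(x))$ as the sampling (projection) variance of the bagged predictor. The paper's Remark calls it the Monte-Carlo variance due to bootstrap randomness, which equals $\text{Var}_*(\hat\tau_b(x))/B$. That is not the $B$-free limit $\sum_i\text{Cov}_*^2$ that the IJ converges to.
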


\textbf{Remark.} The above proposition concerns the accuracy of the IJ as
an estimator of the Monte-Carlo variance $\text{Var}(\hat{\tau}_{\text{forest}})$
due to bootstrap randomness. It does not imply that IJ intervals achieve
nominal coverage for $\tau(x)$, since the dominant error component, which is the
approximation error from the tree partition, is not captured by the IJ.

\subsection{Computational Complexity}
\label{sec:complexity}

Tree growing costs $O(p\,k\,n\log n)$ per tree; honest CV pruning $O(K\,|\text{eff-}\lambda|\,n)$; and IJ variance estimation $O(B^2(n+n_{\text{test}}))$ via the Gram matrix $G = CC^\top$ (independent of test points), which is roughly four orders of magnitude faster than the naive $O(B\,n\,n_{\text{test}})$ formulation at $B=50$, $n=n_{\text{test}}=10^6$. Full derivations and a component-level complexity table are given in Appendix~\ref{sec:complexity-appendix}.

\section{Experiments}

We evaluate \texttt{rattus} on synthetic benchmarks designed to verify
the coverage guarantee of Proposition~1, and on three public uplift
datasets to benchmark Qini coefficient against strong metalearner baselines.

\subsection{Synthetic Benchmarks}

We evaluate on the three simulation designs from \cite{athey2016recursive}.
Potential outcomes follow $Y_i(w) = \eta(X_i) + \frac{1}{2}(2w-1)\kappa(X_i) + \varepsilon_i$
with $\varepsilon_i \sim \mathcal{N}(0, 0.01)$, $p = 0.5$, and $X_i \sim \text{Uniform}([0,1]^K)$.
We use $N_{\text{train}} = 1000$ (50/50 honest split internally), $N_{\text{te}} = 8000$,
and $n_{\text{reps}} = 200$.  Coverage is reported at the leaf-average level,
the quantity for which Proposition~1 provides a guarantee.

\begin{itemize}
\item \textbf{Design 1}: $K=2$, $\eta(x) = \tfrac{1}{2}(x_1 + x_2)$,
      $\kappa(x) = \tfrac{1}{2}x_1$. No noise covariates.
\item \textbf{Design 2}: $K=10$, $\eta(x) = \tfrac{1}{2}\sum_{k=1}^{6} x_k$,
      $\kappa(x) = \sum_{k=1}^{2}\mathbf{1}[x_k > 0.5]\cdot x_k$.
      4 noise covariates.
\item \textbf{Design 3}: $K=20$, $\eta(x) = \tfrac{1}{2}\sum_{k=1}^{12} x_k$,
      $\kappa(x) = \sum_{k=1}^{2}\mathbf{1}[x_k > 0.5]\cdot x_k$.
      8 noise covariates.
\end{itemize}

\begin{table}[htbp]
\centering
\begin{tabular}{lccc}
\toprule
\textbf{Design} & $K$ & \textbf{MSE}$_\tau$ & \textbf{90\% Coverage} \\
\midrule
Design 1 & 2  & 0.0037 (0.0022) & 0.903 (0.134) \\
Design 2 & 10 & 0.0372 (0.0099) & 0.894 (0.114) \\
Design 3 & 20 & 0.0573 (0.0169) & 0.900 (0.112) \\
\bottomrule
\end{tabular}
\caption{Synthetic benchmark results ($n_{\text{reps}}=200$, $N_{\text{train}}=1000$,
$N_{\text{te}}=8000$). Arithmetic means over 200 replications, with standard deviations
in parentheses. Coverage is the fraction of leaves for which the 90\% CI contains the
true leaf-average $\tau_\ell$; nominal level is 0.90.}
\label{tab:synthetic}
\end{table}

\subsection{Real-World Benchmarks}

We evaluate on three public uplift datasets: the \textbf{Criteo}, \textbf{Starbucks}, and \textbf{Hillstrom} datasets.
The \textbf{Criteo} uplift dataset \cite{diemert2018large} contains approximately 14 million
observations, with 12 features, a binary treatment, and two binary
outcomes (visit and conversion). 
The \textbf{Starbucks} dataset contains
approximately 126 thousand observations, with 7 features from a promotional
campaign \cite{rossler2021treat}.   
The \textbf{Hillstrom} dataset contains
approximately 64 thousand observations, with 8 features (recency, history,
mens, womens, zip\_code, newbie, channel, history\_segment) from an email
marketing campaign \cite{hillstrom2008minethatdata,rossler2021treat}. 
For each dataset, we use 50\% of the data for training and 50\% for testing, stratified by treatment.

Models are compared using the Qini coefficient (area under the Qini curve).
Baselines are an S-Learner and T-Learner each using
\texttt{HistGradientBoostingRegressor} as their base learner, and a GRF
Causal Forest \citep{wager2018estimation} implemented via \texttt{econml.grf.CausalForest}
($B=52$, \texttt{max\_depth=6}).
Aside from the GRF Causal Forest, all models are used with default hyperparameters, 
i.e., no hyperparameter tuning is performed and the same hyperparameters are used across datasets. 
Baseline models do not provide
valid uncertainty quantification by construction. $\text{EMSE}_\tau$ results are
therefore given separately in the Appendix.

\begin{table}[htbp]
\centering
\begin{tabular}{llcc}
\toprule
\textbf{Dataset} & \textbf{Model} & \textbf{Qini AUC}  \\
\midrule
\multirow{5}{*}{Criteo}    & Single Tree          & 0.0814 (0.006)   \\
                           & Honest Forest        & 0.0887 (0.005)   \\
                           & S-Learner (HGB)      & 0.0877 (0.004)   \\
                           & T-Learner (HGB)      & 0.0806 (0.005)   \\
                           & GRF Causal Forest    & 0.0859 (0.004)   \\
\midrule
\multirow{5}{*}{Starbucks} & Single Tree          & 0.1954 (0.019)  \\
                           & Honest Forest        & 0.2141 (0.015)  \\
                           & S-Learner (HGB)      & 0.2070 (0.014)  \\
                           & T-Learner (HGB)      & 0.1940 (0.018)  \\
                           & GRF Causal Forest    & 0.2112 (0.015)  \\
\midrule
\multirow{5}{*}{Hillstrom} & Single Tree          & 0.0298 (0.011) \\
                           & Honest Forest        & 0.0346 (0.005) \\
                           & S-Learner (HGB)      & 0.0267 (0.006) \\
                           & T-Learner (HGB)      & 0.0192 (0.009) \\
                           & GRF Causal Forest    & 0.0214 (0.007) \\
\bottomrule
\end{tabular}
\caption{Model performance on Criteo, Starbucks, and Hillstrom datasets (hold-out test set). Arithmetic means over 36 runs, standard deviations in parentheses. Statistical comparisons versus Honest Forest are in Table~\ref{tab:wilcoxon}.}
\label{tab:dataset_model_performance}
\end{table}

The forest gain over the single tree varies substantially across datasets:
modest on Criteo (0.0887 vs.\ 0.0814), moderate on Hillstrom (0.0346
vs.\ 0.0298), and large on Starbucks (0.2141 vs.\ 0.1954). This pattern
aligns with the IJ convergence diagnostic: on Criteo, where the signal-to-noise
ratio is low and the dataset is very large, a single tree grown on 50\% of
the data already captures most of the predictable heterogeneity, and thus bagging
adds little. On Starbucks and Hillstrom, where treatment effects are more
concentrated in a smaller feature space, variance reduction from the forest
ensemble is more beneficial.

The \texttt{rattus} honest forest outperforms the GRF Causal Forest
baseline on all three datasets (Criteo: 0.0887 vs.\ 0.0859; Starbucks: 0.2141
vs.\ 0.2112; Hillstrom: 0.0346 vs.\ 0.0214), despite GRF using an
MSE-gradient splitting criterion \citep{wager2018estimation}. The gap is
most pronounced on Hillstrom, consistent with the theoretical prediction in
\S\ref{sec:splitting}: when datasets contain covariates that affect outcome
levels without driving treatment heterogeneity, $t^2$ splitting avoids those
splits while $\text{EMSE}_\tau$ rewards them, yielding cleaner partitions and
better Qini performance. Two-sided Wilcoxon signed-rank tests across all 36
runs confirm these differences are statistically significant in all
cases ($p < 0.05$); full results are in Table~\ref{tab:wilcoxon}.

\section{Limitations and Future Work}
\label{sec:limitations}

\begin{itemize}

\item \textbf{Propensity score extension.} The transformed-outcome approach
(\S3.7) extends to observational settings but requires an external
propensity score model. A doubly-robust version combining IPTW with direct
estimation would strengthen applicability to heavily confounded settings.

\item \textbf{Variable selection bias.} The default \texttt{'greedy'} mode 
retains a mild selection bias toward high-cardinality features despite quantile binning.
The \texttt{'gi'} mode (\S3.3) addresses this; a formal characterisation of 
the residual bias under the \texttt{'greedy'} mode and a power comparison 
between the two modes across feature-cardinality settings is left for future work.

\item \textbf{Asymptotic theory.} A full asymptotic normality result for
the honest significance forest, analogous to Theorem 4.1 in
\cite{wager2018estimation}, requires additional regularity conditions on
the $t^2$ criterion and the CV pruning step. We leave the formal proof for
future work.

\item \textbf{Computational scaling.}
The per-tree cost is $O(n \log n \cdot p \cdot k)$ where:
$k \leq \texttt{max\_candidates}$ (default 20). Quantile-based binning 
via \texttt{max\_candidates} already approximates histogram-based splitting, 
and the package has been demonstrated on datasets of up to millions of 
observations (Criteo). For extremely large datasets, a globally pre-computed 
bin structure (as in LightGBM) instead of per-node quantile estimation 
could yield further speedups while preserving the honest estimation structure.

\end{itemize}

\section{Conclusion}

We have introduced \texttt{rattus}, a hybrid causal tree and forest
estimator that combines the interaction-focused splitting criterion of
\cite{radcliffe2011real} with the honest sample-splitting and
cross-validation of \cite{athey2016recursive}, extended to forests with
an efficient IJ Monte-Carlo variance diagnostic. The key insight is that
the $t^2$ splitting criterion and the $\text{EMSE}_\tau$ pruning criterion
are complementary: $t^2$ excels at detecting treatment effect modification;
$\text{EMSE}_\tau$ excels at controlling complexity. Each criterion is used
where it offers a comparative advantage, combining the strengths of both
approaches. 
The open-source implementation is reproducible, tested, and designed
with responsible inference as a first-class concern. Valid leaf-level CIs
with nominal coverage are provided by the single honest tree; the forest
provides a lower-variance point estimate and an IJ convergence diagnostic.

\section*{Ethical Statement}

\begin{itemize}
\item \textbf{Preventing harmful subgroup overclaiming.} In clinical trials
and policy evaluations, adaptive methods that overfit heterogeneous treatment
effects can lead researchers to incorrectly identify beneficial subgroups
and recommend targeted interventions that cause harm when deployed. The
honest estimation guarantee, i.e., that leaf CATEs and confidence intervals
are unbiased, is not merely a statistical nicety; it is a precondition
for responsible reporting of subgroup analyses. Our library is designed
with this in mind: the API encourages honest estimation by default, warns
users when pre-pruning and CV pruning are combined in ways that may be
counterproductive, and provides clear documentation of when CIs are and
are not valid.

\item \textbf{Negative effects and retention.} The significance-based
splitting criterion is particularly valuable in retention applications,
where treatment can backfire for some customers \cite{radclifte2008identifying}.
By directly testing for effect modification, \texttt{rattus} identifies subgroups
with negative treatment effects that would be missed by response models.

\item \textbf{Open-source reproducibility.} All results in this paper are
reproducible from the \texttt{rattus} package with fixed random seeds.
\end{itemize}

\bibliography{example_paper}

\appendix
\section{Additional Details}

\subsection{Computational Complexity}
\label{sec:complexity-appendix}

Table~\ref{tab:complexity} summarises the cost of each component. We derive
each bound below, treating $k \le \texttt{max\_candidates}$ as the number of
candidate thresholds per feature, $d_{\max}$ as the tree depth, $B$ as the
number of trees, and $K$ as the number of CV folds.

\paragraph{Tree growing.}
Consider a node having $m$ observations. For a single feature, the
$t^2$ search maintains running counts, sums, and sums of squares of $Y$ for
each of the four treatment$\times$side cells. Sorting the $m$ values (or
extracting $k$ quantile cut-points) costs $O(m\log m)$, after which each
candidate threshold is scored in $O(1)$ by incrementing the cumulative
sufficient statistics, resulting in $O(k)$ per feature. Searching all $p$ features
therefore costs $O\!\left(p\,(m\log m + k)\right)$ at the node. Summing over all
nodes at a fixed depth, the node sizes partition the $n$ observations, so by subadditivity of
$m\mapsto m\log m$ the per-level cost is $O(p\,n\log n)$. Over $d_{\max}=O(\log n)$
levels this yields the upper bound $O(p\,k\,n\log n)$ reported in
Table~\ref{tab:complexity}; the bound is loose, and pre-sorting each feature
once (so that per-node work is an $O(m)$ scan rather than a re-sort) removes a
$\log n$ factor in practice.

\paragraph{Honest CV pruning.}
The effective-$\lambda$ grid includes one entry per internal node, so
$|\text{eff-}\lambda| = O(\#\text{internal nodes})$. The nested sequence of
pruned trees is materialised once, outside the fold loop. For each fold
$k\in\{1,\dots,K\}$ and each candidate $\lambda$, scoring routes the fold's
$n/K$ validation points to their leaves in $O\!\left((n/K)\,d_{\max}\right)$ and
accumulates the honest $\text{EMSE}_\tau$. Summed over folds and penalties this
is $O\!\left(|\text{eff-}\lambda|\,n\,d_{\max}\right)$; Table~\ref{tab:complexity}
absorbs $d_{\max}$ into the stated $O(K\,|\text{eff-}\lambda|\,n)$ upper bound.

\paragraph{Forest fitting and prediction.}
Each of the $B$ trees is grown on a bootstrap resample with feature
subsampling, replacing $p$ by $\lfloor\sqrt p\rfloor$ in the per-tree bound and
giving $O\!\left(B\,n\log n\,\lfloor\sqrt p\rfloor\,k\right)$, trivially
parallel across trees. Prediction routes each of the $n_{\text{test}}$ points
through all $B$ trees, each routing in $O(d_{\max})$, for
$O(B\,n_{\text{test}}\,d_{\max})$.

\paragraph{Infinitesimal jackknife: exact algebra.}
The empirical covariance over the $B$ bootstrap replicates is given by:
\begin{equation}
\widehat{\mathrm{Cov}}_b\!\left[N_{bi},\,T_b(x)\right]
= \frac{1}{B}\sum_{b=1}^{B}\left(N_{bi}-1\right)\!\left(T_b(x)-\bar T(x)\right)
\label{eq:cov}
\end{equation}
where $\bar T(x)=\frac1B\sum_{b=1}^B T_b(x)$ and we use $\mathbb{E}[N_{bi}]=1$ to centre the counts. Collect the
centred counts into $C\in\mathbb{R}^{B\times n}$ with $C_{bi}=N_{bi}-1$, and the
centred predictions into $\tilde T\in\mathbb{R}^{B\times n_{\text{test}}}$ with
$\tilde T_{bj}=T_b(x_j)-\bar T(x_j)$. Then \eqref{eq:cov} is the $(i,j)$ entry of
a matrix product,
$\widehat{\mathrm{Cov}}_b[N_{bi},T_b(x_j)] = \tfrac1B (C^\top \tilde T)_{ij}$,
and the IJ variance is the column-wise sum of squares
\begin{equation}
\widehat{\mathrm{Var}}_{IJ}(x_j)
= \sum_{i=1}^{n}\widehat{\mathrm{Cov}}_b[N_{bi},T_b(x_j)]^2
= \frac{1}{B^2}\sum_{i=1}^{n}\bigl(C^\top \tilde T\bigr)_{ij}^{2}.
\label{eq:ij-matrix}
\end{equation}
The naive triple loop over $(b,i,j)$ evaluates \eqref{eq:cov} for every
training--test pair, costing $O(B\,n\,n_{\text{test}})$.

\paragraph{Infinitesimal jackknife: efficient computation.}
Expanding the square in \eqref{eq:ij-matrix} and exchanging the order of
summation reveals that the dependence on the training index $i$ enters only
through a Gram matrix of the counts:
\begin{align}
\widehat{\mathrm{Var}}_{IJ}(x_j)
&= \frac{1}{B^2}\sum_{i=1}^{n}
   \Bigl(\sum_{b} C_{bi}\tilde T_{bj}\Bigr)
   \Bigl(\sum_{b'} C_{b'i}\tilde T_{b'j}\Bigr) \nonumber\\
 &= \frac{1}{B^2}\sum_{b,b'}\tilde T_{bj}\,\tilde T_{b'j}
   \underbrace{\sum_{i=1}^{n} C_{bi}C_{b'i}}_{G_{bb'}} \nonumber\\
&= \frac{1}{B^2}\,\tilde t_j^{\top} G\, \tilde t_j,
\qquad G = CC^\top \in \mathbb{R}^{B\times B},
\label{eq:ij-gram}
\end{align}
where $\tilde t_j$ is the $j$-th column of $\tilde T$. The Gram matrix $G$ is
\emph{independent of the test points} and is formed once in $O(B^2 n)$.
Each test point then requires a single $B\times B$ quadratic form, $O(B^2)$, for
$O(B^2 n_{\text{test}})$ across all test points. The total cost is
\begin{equation}
O\!\left(B^2 n + B^2 n_{\text{test}}\right)
= O\!\left(B^2 (n + n_{\text{test}})\right),
\label{eq:ij-cost}
\end{equation}
which improves on the naive $O(B\,n\,n_{\text{test}})$ triple loop by a
factor of order $\tfrac{n\,n_{\text{test}}}{B(n+n_{\text{test}})}$; for $B=50$
and $n=n_{\text{test}}=10^6$ this is a speed-up of roughly four orders of
magnitude. This formulation is implemented in \texttt{rattus}.

\paragraph{Finite-$B$ bias correction.}
The estimator \eqref{eq:ij-matrix} carries an upward Monte-Carlo bias of order
$n/B$ \citep{wager2014confidence}. We apply the bias-corrected variant
\begin{equation}
\widehat{\mathrm{Var}}_{IJ\text{-}U}(x_j)
= \widehat{\mathrm{Var}}_{IJ}(x_j)
  - \frac{n}{B^2}\sum_{b=1}^{B}\bigl(T_b(x_j)-\bar T(x_j)\bigr)^2,
\end{equation}
whose additional cost is $O(B\,n_{\text{test}})$ and therefore does not change
the bound \eqref{eq:ij-cost}.

\begin{table}[ht]
\centering
\begin{tabular}{lc}
\toprule
Operation & Complexity \\
\midrule
Single tree growing & $O(n \log n \cdot p \cdot k)$ \\
Honest CV pruning & $O(K \cdot |\text{eff-}\lambda| \cdot n)$ \\
Forest fitting (parallelisable) & $O(B \cdot n \log n \cdot \lfloor\sqrt{p}\rfloor \cdot k)$ \\
Forest prediction & $O(B \cdot n_{\text{test}} \cdot d_{\text{max}})$ \\
IJ variance (Gram matrix) & $O(B^2(n + n_{\text{test}}))$ \\
\bottomrule
\end{tabular}
\caption{Computational complexity of \texttt{rattus} components.
$k \leq$ \texttt{max\_candidates}; $|\text{eff-}\lambda|$ is the number of
distinct gains in the grown tree (replaces a fixed geometric grid).}
\label{tab:complexity}
\end{table}

\subsection{Proof of Proposition 3}

In an adaptive forest, $\hat{\tau}_b(x)$ depends on $S^{tr}_b$ for both
partition selection and leaf estimation, making tree predictions jointly
dependent on bootstrap counts in a complex way that introduces a first-order
bias in the IJ approximation. In the honest setting, the partition $\Pi_b$
depends on $S^{tr}_b$ and the estimate depends only on $S^{est}_b$. Since
$S^{est}_b$ and $S^{tr}_b$ are independent within each bootstrap sample,
the partial derivative $\partial \hat{\tau}^{honest}_b(x) / \partial N_{bi}$
splits cleanly across the two halves, and the problematic first-order term
is zero. The remaining error is $O(n^{-1/2})$ from the Hájek approximation
itself, giving the stated $O_p(B^{-1/2} n^{-1/2})$ rate.

\subsection{Computing Infrastructure}
\label{sec:computing}

All experiments were run with Python~3.12 using \texttt{rattus}~0.1.9,
\texttt{numpy}~2.2.6, \texttt{scipy}~1.17.1, \texttt{scikit-learn}~1.6.1,
and \texttt{econml}~0.16.0.  
Each of the 36 independent runs uses a distinct random seed.

\subsection{ \textbf{EMSE}$_\tau$ Results}

\begin{table}[htbp]
\centering
\begin{tabular}{llcc}
\toprule
\textbf{Dataset} & \textbf{Model} & \textbf{EMSE}$_\tau$ \\
\midrule
\multirow{2}{*}{Criteo}    & Single Tree     &  $5.3 \times 10^{-4}$ \\
                           & Honest Forest   &  $4.9 \times 10^{-4}$\\
\midrule
\multirow{2}{*}{Starbucks} & Single Tree      &  $1.4 \times 10^{-4}$ \\
                           & Honest Forest    & $1.0 \times 10^{-4}$\\
\midrule
\multirow{2}{*}{Hillstrom} & Single Tree       &  $3.5 \times 10^{-3}$ \\
                           & Honest Forest   &  $3.1 \times 10^{-3}$ \\
\bottomrule
\end{tabular}
\caption{Honest $\text{EMSE}_\tau$ on Criteo, Starbucks and Hillstrom datasets on a hold-out test set. Arithmetic means over 36 runs. S- and T-Learner baselines do not produce $\text{EMSE}_\tau$ estimates by construction. GRF uses an internal MSE-gradient criterion \citep{wager2018estimation} rather than $\text{EMSE}_\tau$; applying our leaf-based metric to its predictions would be non-comparable.}
\label{tab:dataset_model_performance_emse}
\end{table}

\subsection{Wilcoxon Signed-Rank Tests}
\label{sec:wilcoxon-appendix}

Table~\ref{tab:wilcoxon} reports two-sided Wilcoxon signed-rank tests
comparing the Honest Forest against each baseline across all 36 independent
runs.
The null hypothesis is that the median per-run difference in Qini AUC is zero; a two-sided test is used because we make no directional claim, we wish to establish that \texttt{rattus} is at least non-inferior.

\begin{table}[htbp]
\centering
\begin{tabular}{llrr}
\toprule
\textbf{Dataset} & \textbf{Baseline} & \textbf{Mean $\Delta$} & \textbf{$p$ (two-sided)} \\
\midrule
\multirow{4}{*}{Hillstrom}
  & Single Tree       & $+0.0048$ & $0.042$\phantom{$^*$} \\
  & S-Learner (HGB)   & $+0.0079$ & $<$0.001 \\
  & T-Learner (HGB)   & $+0.0154$ & $<$0.001 \\
  & GRF Causal Forest & $+0.0132$ & $<$0.001 \\
\midrule
\multirow{4}{*}{Starbucks}
  & Single Tree       & $+0.0187$ & $<$0.001 \\
  & S-Learner (HGB)   & $+0.0071$ & $0.001$\phantom{$^*$} \\
  & T-Learner (HGB)   & $+0.0201$ & $<$0.001 \\
  & GRF Causal Forest & $+0.0029$ & $0.043$\phantom{$^*$} \\
\midrule
\multirow{4}{*}{Criteo}
  & Single Tree       & $+0.0073$ & $<$0.001 \\
  & S-Learner (HGB)   & $+0.0010$ & $0.013$\phantom{$^*$} \\
  & T-Learner (HGB)   & $+0.0081$ & $<$0.001 \\
  & GRF Causal Forest & $+0.0028$ & $<$0.001 \\
\bottomrule
\end{tabular}
\caption{Two-sided Wilcoxon signed-rank tests: Honest Forest vs.\ each
baseline, $n=36$ runs. Mean $\Delta$ is the per-run mean difference in Qini
AUC (Honest Forest minus baseline). All 12 comparisons are statistically
significant ($p < 0.05$), with positive mean differences throughout.}
\label{tab:wilcoxon}
\end{table}

\end{document}